\newtheorem{theorem}{Theorem}
\newtheorem{lemma}{Lemma}
\newenvironment{proof}{\noindent\textbf{Proof Sketch:}}{\hfill$\Box$}
\newenvironment{proofsketch}{\noindent\textbf{Sketch of proof:}}{\hfill$\Box$}
\newtheorem{definition}{Definition}
\newcommand{\remove}[1]{}
\begin{document}

\begin{titlepage}

\renewcommand{\thefootnote}{\fnsymbol{footnote}}

\title{Practical Stabilizing Atomic Memory\\
{\rm\large (Extended Abstract)}}
\author{
\begin{tabular}{ccccc}
Noga Alon\footnotemark[1]& &
Hagit Attiya\footnotemark[2]& &
Shlomi Dolev\footnotemark[3]\\
Swan Dubois\footnotemark[4] & &
Maria Gradinariu\footnotemark[4] & &
S\'ebastien Tixeuil\footnotemark[4]
\end{tabular}}

\maketitle

\footnotetext[1]{Sackler School of Mathematics and Blavatnik School of Computer Science, Raymond and Beverly Sackler Faculty of Exact Sciences, Tel Aviv University, Tel Aviv, 69978, Israel. Email: {\tt nogaa@tau.ac.il}. Research supported
in part by an ERC advanced grant, by a USA-Israeli BSF grant, by the Israel Science Foundation.}

\footnotetext[2]{Department of Computer Science, Technion, 32000, Israel. Email: {\tt hagit@cs.technion.ac.il}.
Research supported in part by the {\em Israel Science Foundation}
(grant number 953/06).}

\footnotetext[3]{Contact author. Department of Computer Science,
Ben-Gurion University of the Negev, 
Beer-Sheva, 84105, Israel. Email: {\tt dolev@cs.bgu.ac.il}. The work started while this author was a visiting professor at LIP6.
Research supported in part by the ICT Programme of the European Union under contract
number FP7-215270 (FRONTS), Microsoft, Deutsche Telekom, US Air-Force
and Rita Altura Trust Chair in Computer Sciences.}

\footnotetext[4]{LIP6, Universite Pierre et Marie Curie, Paris 6/INRIA, 7606, France.}
 
\begin{abstract}
A self-stabilizing simulation of a single-writer multi-reader
atomic register is presented.
The simulation works in asynchronous message-passing systems,
and allows processes to crash, as long as at least a majority of them
remain working.
A key element in the simulation is a new combinatorial construction
of a bounded labeling scheme that can accommodate \emph{arbitrary}
labels, i.e., 
including those not generated by the scheme itself.
\end{abstract}

\thispagestyle{empty}\end{titlepage}

\renewcommand{\thefootnote}{\arabic{footnote}}

\section{Introduction}

Distributed systems have become an integral part of virtually all
computing systems, especially those of large scale.
These systems must provide high availability and reliability
in the presence of failures, which could be either permanent
or transient.

A core abstraction for many distributed algorithms simulates
shared memory~\cite{ABD95}; this abstraction allows to take
algorithms designed for shared memory, and port them to asynchronous
message-passing systems, even in the presence of failures.
There has been significant work on creating such simulations,
under various types of permanent failures, as well as on exploiting
this abstraction in order to derive algorithms for message-passing
systems.
(See a recent survey~\cite{A10}.)

All these works, however, only consider permanent failures,
neglecting to incorporate mechanisms for handling \emph{transient}
failures.
Such failures may result from incorrect initialization of the system,
or from temporary violations of the assumptions made by the system
designer, for example the assumption that a corrupted message
is always identified by an error detection code.
The ability to automatically resume normal operation following
transient failures, namely to be \emph{self-stabilizing}~\cite{D2K},
is an essential property that should be integrated
into the design and implementation of systems.

This paper presents the first practical self-stabilizing simulation of shared memory that tolerats crashes. 
Specifically,
 we propose a single-writer multi-reader (SWMR) atomic register in asynchronous message-passing systems where 
less than a majority of processors may crash. 
A single-writer multi-reader register is \emph{atomic} if each read operation returns the value of the most recent write 
operation happened before it or the value written by a concurent write.

The simulation is based on reads and writes to a (majority) quorum in a system with a 
fully connected graph topology\footnote{Note that the use of standard end-to-end schemes can be used to implement 
the quorum operation in the case of general communication graph.}. 
A key component of the simulation is a new bounded labeling scheme
that needs no initialization, as well as a method for using it
when communication links and processes are started at an arbitrary
state.

\paragraph*{Overview of our simulation.}

Attiya, Bar-Noy and Dolev~\cite{ABD95} presented the first simulation
of a SWMR atomic register in a message-passing system, supporting two
procedures, {\sf read} and {\sf write}, for accessing the register.
This simple simulation is based on a quorum approach:
In a {\sf write} operation, the writer makes sure that a quorum of
processors (consisting of a majority of the processors, in its
simplest variant) store its latest value.
In a {\sf read} operation, a reader contacts a quorum of processors,
and obtains the latest values they store for the register;
in order to ensure that other readers do not miss this value,
the reader also makes sure that a quorum stores its return value.

A key ingredient of this scheme is the ability to distinguish between
older and newer values of the register; this is achieved by attaching
a \emph{sequence number} to each register value.
In its simplest form, the sequence number is an unbounded integer,
which is increased whenever the writer generates a new value.
This solution could be appropriate for a an \emph{initialized} system,
which starts in a consistent configuration, in which all sequence
numbers are zero, and are only incremented by the writer
or forwarded as is by readers.
In this manner, a 64-bit sequence number will not wrap around for
a number of writes that is practically infinite,
certainly longer than the life-span of any reasonable system.

However, when there are transient failures in the system,
as is the case in the context of self-stabilization,
the simulation starts at an uninitialized state,
where sequence numbers are not necessarily all zero.
It is possible that, due to a transient failure, the sequence numbers
might hold the maximal value when the simulation starts running,
and thus, will wrap around very quickly.

Our solution is to partition the execution of the simulation into
\emph{epochs}, namely periods during which the sequence numbers
are supposed not to wrap around.
Whenever a ``corrupted'' sequence number is discovered,
a new epoch is started, overriding all previous epochs;
this repeats until no more corrupted sequence numbers are
hidden in the system, and the system stabilizes.
Ideally, in this steady state, after the system stabilizes,
it will remain in the same epoch
(at least until all sequence numbers wrap around,
which is unlikely to happen).

This raises, naturally, the question of how to label epochs.
The natural idea, of using integers, is bound to run into
the same problems as for the sequence numbers.
Instead, we capitalize on another idea from~\cite{ABD95},
of using a bounded labeling scheme for the epochs.
A \emph{bounded labeling scheme}~\cite{IL93,DS97} provides a
function for generating labels (in a bounded domain),
and guarantees that two labels can be compared to
determine the largest among them.

Existing labeling schemes assume that initially, labels have
specific initial values, and that new labels are introduced
only by means of the label generation function.
However, transient failures, of the kind the self-stabilizing
simulation must withstand, can create incomparable labels,
so it is impossible to tell which is the largest among
them or to pick a new label that is bigger than all of them.

To address this difficulty, we present a constructive bounded
labeling scheme that allows to define a label larger than
{\it any set} of labels, provided that its size is bounded.
We assume links have bounded capacity, and hence the
number of epochs initially hidden in the system is bounded.

The writer tracks the set of epochs it has seen recently;
whenever the writer discovers that its current epoch is not
the largest, or is incomparable to some existing epoch,
the writer generates a new epoch l that is larger than all
the epochs it has.
The number of bits required to represent a label depends
on $m$, the maximal size of the set, and it is in $O(m\log m)$.
We ensure that the size of the set is proportional to the
total capacity of the communication links, namely, $O(c n^2)$,
where $c$ is the bound on the capacity of each link, and hence,
each epoch requires $O((cn^2(\log n+\log c))$ bits.

It is possible to reduce this complexity, making $c$ essentially
constant, by employing a data-link protocol for communication
among the processors.

%

We show that, after a bounded number of {\sf write} operations,
the results of reads and writes can be totally casually ordered
in a manner that respects the read-time order of non-overlapping operations, 
so that the sequence of operations satisfies the
semantics of a SWMR register.
This holds until the sequence numbers wrap around, as can happen
in a realistic version of the unbounded ABD simulation.


\paragraph*{Related work.}
Self-stabilizing simulation of an atomic single-writer single-reader shared registers,
on a message-passing system, was presented in \cite{DIM97}.
This simulation does not address SWMR register.
Moreover, the simulation cannot withstand processor crashes.
More recent~\cite{DH01,JH09} papers focused on self-stabilizing simulation of shared registers using weaker shared registers. Self-stabilizing timestamps implementations using SWMR atomic registers were suggested in \cite{A03,DKS06}.
These implementations already assume the existence of a shared
memory, while, in contrast, we simulate a shared SWMR atomic register
using message passing.


\section{Preliminaries}
\label{sec:settings}

A \emph{message-passing system} consists of $n$ {\em processors},
$p_0,p_1,p_2,\ldots,p_{n-1}$,
connected by {\em communication links} through which messages are sent and received.
We assume that the underlying communication graph is completely
connected, namely, every pair of processors, $p_i$ and $p_j$,
have a communication link.

A processor is modeled by a state machine that executes
\emph{steps}.
In each step, the processor changes its state,
and executes a single communication operation,
which is either a {\em send} message operation
or a {\em receive} message operation.
The communication operation changes the state of an attached link,
in the natural manner.

The system {\em configuration} is a vector of $n$ states, a
state for each processors and $2(n^2-n)$ sets,
each bounded by a constant message capacity $c$.
A set $s_{ij}$ (rather than a queue, reflects the non-fifo nature)
for each directed edge ($i,j$) from a processor
$p_i$ to a processor $p_j$.
Note that in the scope of self-stabilization, where the system copes
with an arbitrary starting configuration, there is no deterministic data-link simulation 
that use bounded memory when the capacity of links is unbounded~\cite{DIM97}.

An {\it execution} is a sequence of configurations and steps,
$E=(C_1,a_1,C_2,a_2 \ldots)$ such that $C_i$, $i>1$, is obtained by
applying $a_{i-1}$ to $C_{i-1}$, where $a_{i-1}$ is a step of
a single processor, $p_j$, in the system.
Thus, the vector of states, except the state of $p_j$,
in $C_{i-1}$ and $C_{i}$ are identical.
In case the single communication operation in $a_{i-1}$ is a send
operation to $p_k$ then $s_{jk}$ in $C_i$ is a union of
$s_{jk}$ in $C_{i-1}$ with the message sent in $a_{i-1}$.
If the obtained union does not respect the
message bound $|s_{jk}|=c$ then an arbitrary
message in the obtained union is deleted.
The rest of the message sets are kept unchanged.
In case, the single communication operation in $a_{i-1}$ is a receive
operation of a (non null) message $m$,
then $m$ (must exist in $s_{kj}$ of $C_{i-1}$ and)
is removed from $s_{kj}$, all the rest of the sets are identical
in $C_{i-1}$ and $C_i$. A receive operation by $p_j$ from $p_k$ may
result in a null message even when the $s_{kj}$ is not empty,
thus allowing unbounded delay for any particular message.
Message losses are modeled by allowing
spontaneous message removals from the set.
An edge ($i,j$) is operational if a message sent infinitely often by
$p_i$ is received infinitely often by $p_j$.


For the simulation of a \emph{single writer multi-reader} (SWMR)
atomic register, we assume $p_0$ is the writer
and $p_1,p_2,\ldots,p_{n-1}$ are the readers.
$p_0$ has a {\sf write} procedure/operation and the readers have
{\sf read} procedure/operation. 
The sub-execution between
the step that starts a {\sf write}
procedure and the next step that ends
the {\sf write} procedure execution defines a {\it write period}.
Similarly, for a particular {\sf read} by processor $p_i$,
the sub-executions between the step that starts a {\sf read} procedure
by processors $p_i$ and the next step that ends the
{\sf read} procedure execution of $p_i$ defines a {\it read period}.

{\bf SWMR atomic register.}
A single-writer multi-reader atomic register supplies two operations: $read$ and $write$. 
An invocation of a $read$ or $write$  translates into a sequence of computation steps.
A sequence of invocations of $read$ and $write$ operations generates an execution in which the 
computation steps corresponding to different invocations are interleaved. 
An operation $op_1$ happens before an operation $op_2$ in this execution, if $op_1$ returns before $op_2$ is invoked. 
Two operations overlap if neither of them happens before the other. 
Each interleaved execution of an atomic register is required to be \emph{linearizable} \cite{lamport86}, that is, it must be equivalent 
to an execution in which the operations are executed sequentially, and the order of non-overlapping operations is preserved. 
The main difference between a regular register (a register that satisfies the property that every read retuns the value written 
by the most recent write or by a concurrent write) 
and an atomic register is the absence for the latter of the {\it new/old inversions}. 
Consider two consecutive\footnote{Two operations $op_1$ and $op_2$ are consecutives if $op_1$ is the most recent operation that {\it happens before} $op_2$.} 
reads  $r_1$, $r_2$ and two consecutive writes $w_1$, $w_2$ of a regular register such that  $r_1$ is concurrent with  both 
$w_1$ and $w_2$ and $r_2$ is concurrent only with $w_2$. 
The regularity property allows  $r_2$ to 
return the value writen by  $w_1$ and  $r_1$ to return the value writen by  
$w_2$. This phenomena is called the new/old inversion. 

An atomic register prevents in all executions the new/old inversions.

Formally, an atomic register verifies the following two properties:
\begin{itemize}
\item {\bf \emph{Regularity} property.} A $read$ operation
returns either the value written by the most recent
$write$ operation that happend before the $read$ or a value written by a concurrent $write$. 

\item {\bf \emph{No new old/inversions}} If a $read$ operation $r_1$ reads a value from a concurrent 
$write$ operation $w_2$ then no read operation that happens after $r_1$ reads a value from a 
write operation $w_1$ that happens before $w_2$. 
\end{itemize}

{\bf Practical stabilizing SWMR atomic register.}
A message passing system simulates a SWMR atomic register in a
practical stabilizing manner,
if any infinite execution starting in arbitrary configuration in which the writer writes infinitely often
has a sub-execution with a practically infinite number of write operations, in which the atomicity requirement holds.
A \emph{practically infinite execution} is an execution of at least
$2^k$ steps, for some large $k$;
for example, $k=64$ is big enough for any practical system.

\section{Overview of the Algorithm}
\label{sec:over}

\subsection{The Basic Quorum-Based Simulation}

We describe the basic simulation, 
which follows the quorum-based approach of~\cite{ABD95},
and ensures that our algorithm tolerates (crash) failures 
of less than a majority of the processors. Our simulation assumes the existance of an underlying 
stabilizing \emph{data-link}
protocol,\cite{dolev}, similar to the ping-pong mechanism used in~\cite{ABD95}.

The simulation relies on a set of \emph{read and write quorums},
each being 
a majority of processors.
The simulation specifies the {\sf write} and {\sf read} procedures,
in terms of {\sf QuorumRead} and {\sf QuorumWrite} operations.
The {\sf QuorumRead} procedure sends
a
request to every processor, for reading a certain local variable
of the processor; the procedure terminates with the obtained values,
after receiving answers from processors that form a quorum.
Similarly, the {\sf QuorumWrite} procedure sends a value to every
processor to be written to a certain local variable of the processor;
it terminates when acknowledgments from a quorum are received.
If a processor that is inside {\sf QuorumRead} or {\sf QuorumWrite}
keeps taking steps,
then the procedure terminates (possibly with arbitrary values).
Furthermore, if a processor starts {\sf QuorumRead} procedure execution, 
then the stabilizing data link~\cite{dolev} ensures
that a read of a value returns a value held by the read variable
some time during its period;
similarly, a {\sf QuorumWrite}($v$) procedure execution,
causes $v$ to be written to the variable during its period.

Each processor $p_i$ maintains a variable, $MaxSeq_i$, which is
meant to hold the ``largest'' sequence number the processor has read.
$p_i$ maintains in $v_i$ the value that $p_i$ knows for
the implemented register (which is associated with $MaxSeq_i$).

The {\sf write} procedure of a value $v$ starts with a {\sf QuorumRead} of the $MaxSeq_i$ variables;
upon receiving answers $l_1,l_2,\ldots$ from a quorum,
the writer picks a sequence number $l_m$ that is larger
than  $MaxSeq_0$ and $l_1,l_2,\ldots$ by one;
the writer assigns $l_m$ to $MaxSeq_0$ and calls {\sf QuorumWrite}
with the value $\langle{l_m, v}\rangle$.
Whenever a quorum member $p_i$ receives a {\sf QuorumWrite} request
$\langle l,v \rangle$ for which $l$ is larger than $MaxSeq_i$,
$p_i$ assigns $i$ to $MaxSeq_i$ and $v$ to $v_i$.

The {\sf read} procedure by $p_i$ starts with a {\sf QuorumRead}
of both the $MaxSeq_j$ and the (associated) $v_j$ variables.
When $p_i$ receives answers
$\langle{l_1, v_1}\rangle$, $\langle{l_2, v_2}\rangle \ldots$
from a quorum,
$p_i$ finds the largest label $l_m$ among $MaxSeq_i$,
and $l_1, l_2, \ldots$ and then
calls {\sf QuorumWrite} with the value $\langle l_m, v_m \rangle$.
This ensures that later {\sf read} operations will return this,
or a later, value of the register.
When {\sf QuorumWrite} terminates,
after a write quorum acknowledges,
$p_i$ assigns $l_m$ to $MaxSeq_i$ and $v_m$ to $v_i$ and
returns $v_m$ as the value read from the register.

Note that the {\sf QuorumRead} operation, beginning the write
procedure of $p_0$, helps to ensure that $MaxSeq_0$ holds the
maximal value, as the writer reads the biggest {\em accessible}
value (directly read by the writers, or propagated to
variables that are later read by the writer)
in the system during any write.

Let $g(C_1)$ be the number of distinct values greater
than $MaxSeq_0$ that exist in some configuration $C_1$.
Since all the processors, except the writer, only copy values
and since $p_0$ can only increment the value of $MaxSeq_0$
it holds for every $i\geq 1$ that
\[ g(C_i)\geq g(C_{i+1}) ~. \]
Furthermore,
\[ g(C_i) > g(C_{i+1}) ~, \]
whenever the writer discovers (when executing step $a_i$) a
value greater than $MaxSeq_0$.
Roughly speaking, the faster the writer discovers these values,
the earlier the system stabilizes.
If the writer does not discover such a value, then the (accessible)
portion of the system in which its values are repeatedly written, performs reads and writes correctly.

%

\subsection{Epochs}

As described in the introduction, it is possible that the sequence
numbers wrap around faster than planned, due to ``corrupted''
initial values.
When the writer discovers that this has happened,
it opens a new \emph{epoch},
thereby invalidating all sequence numbers from previous epochs.

Epochs are denoted with labels from a bounded domain,
using a \emph{bounded labeling scheme}.
Such a scheme provides a function to compute a new label,
which is ``larger'' than a given set of labels.

\begin{definition}
A \emph{labeling scheme} over a bounded domain ${\cal L}$,
provides an antisymmetric comparison predicate $\prec_b$ on ${\cal L}$
and a function $\mathbf{Next}(S)$ that returns a label
in ${\cal L}$, given some subset $S \subseteq {\cal L}$
of size at most $m$.
It is guaranteed that for every $L \in S$,
$L \prec_b \mathbf{Next}_b(S)$.
\end{definition}

Note that the labeling scheme 
~\cite{IL93}, used in the original atomic memory simulation~\cite{ABD95} does not 
cope with transient failures.
The next section describes a construction of a bounded labeling
scheme that can cope with badly initialized labels,
namely, that does not assume that labels were only generated
by using $\mathbf{Next}$.

Using this scheme, it is guaranteed that if the writer eventually
learns about all the epochs in the system, it will generate an
epoch greater than all of them.
After this point, any read that starts after a write of $v$ is
completed (written to a quorum) returns $v$ (or a later value),
since the writer will use increasing sequence numbers.

The eventual convergence of the labeling scheme depends on invoking
$\mathbf{Next}_b$ with a parameter $S$ that is a superset of
the epoches that are in the system.
Estimating this set is another challenge for the simulation.

We explain the intuition of this part of the simulation
through the following two-player \emph{guessing game},
between a \emph{finder}, representing the writer,
and a \emph{hider}, representing an adversary controlling the system.
\begin{itemize}
\item[--]
The hider maintains a set of labels ${\cal H}$,
whose size is at most $m$ (a parameter that will be chosen later).
\item[--]
The finder does not know ${\cal H}$, but it would like to generate
a label greater than all labels in ${\cal H}$.
\item[--]
The finder generates a label $L$ and if ${\cal H}$
contains a label $L'$,
such that it does not hold that $L' \prec_b L$,
then the hider exposes $L'$ to the finder.
\item[--]
In this case, the hider may choose to add $L$ to ${\cal H}$,
however, it must ensure that the size of ${\cal H}$ remains
smaller than $m$ (by removing another label).
(The finder is unaware of the hiders decision.)
\item[--]
If the hider does not expose a new label $L'$ from ${\cal H}$
the finder wins this iteration and continues to use $L$.
\end{itemize}

The finder uses the following strategy.
It maintains a fifo queue of $2m$ labels,
meant to track the most recent labels.
The queue starts with arbitrary values,
and during the course of the game,
it holds up to $m$ recent labels produced by the finder,
that turned out to be overruled by existing labels
(provided by the hider).
The queue also holds up to $m$ labels that were revealed
to overrule these labels.

Before the finder chooses a new label,
it enqueues its previously chosen label
and the label received from the hider in response.
Enqueuing a label that appears in the queue pushes the
label to the head of the queue;
if the bound on the size of the queue is reached,
then the oldest label in the queue is dequeued. 
This semantics of enqueue is
used throughout the paper.

The finder choose the next label by applying $\mathbf{Next}$,
using as parameter the $2m$ labels in the queue.
Intuitively, the queue eventually contains a superset of ${\cal H}$,
and the finder generates a label greater than all the current
labels of the hider.

\begin{lemma}
All the labels of the hider are smaller than one of the first $m+1$
labels chosen by the finder.
\end{lemma}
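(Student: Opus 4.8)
I want to show that after at most $m+1$ label-choices by the finder, every label currently in the hider's set $\mathcal{H}$ is dominated (under $\prec_b$) by one of those first $m+1$ chosen labels. The natural strategy is a counting/pigeonhole argument over the iterations of the game, exploiting two facts: (i) $\mathcal{H}$ has size at most $m$ at all times, and (ii) the finder's queue of length $2m$ is guaranteed to retain, at the moment it computes a new label, all of the labels that were ever exposed to it in the previous $m$ rounds (since each round enqueues at most two labels, $2m$ slots suffice to remember the last $m$ rounds' worth of exposures and overruled labels). So when the finder calls $\mathbf{Next}$ on its queue in round $j$, the resulting label $L_j$ satisfies $L' \prec_b L_j$ for every $L'$ that the hider exposed in rounds $j-m,\dots,j-1$.

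\textbf{The core argument.} I would argue by contradiction: suppose that after the first $m+1$ chosen labels $L_1,\dots,L_{m+1}$, there is still some label $H \in \mathcal{H}$ with $H \not\prec_b L_i$ for all $i \le m+1$. Track what the hider must have done in rounds $1$ through $m+1$. In round $i$, the finder plays $L_i$; if the hider holds a label not dominated by $L_i$ (and there is at least one, namely $H$ or a label that $H$ later replaced), it must expose \emph{some} such label $L'_i$. The key observation is that each exposed label $L'_i$ must be ``new'' in a strong sense — it cannot have been exposed in any of the previous $m$ rounds, because any label exposed in rounds $i-m,\dots,i-1$ is in the finder's queue and hence is dominated by $L_i$ (so the hider could not re-expose it as a witness against $L_i$). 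Therefore the labels $L'_1,L'_2,\dots$ exposed over the first $m+1$ rounds, restricted to any window of $m$ consecutive rounds, are pairwise distinct, giving $m+1$ distinct labels that were all present in $\mathcal{H}$ at various times within a span where $\mathcal{H}$ never had room for more than $m$ of them. Making this precise requires noting that the hider can delete labels, so I need: every $L'_i$ was in $\mathcal{H}$ at the start of round $i$, and once the finder plays $L_i$, any label dominated by $L_i$ that stays in $\mathcal{H}$ is ``harmless.'' The surviving adversarial labels form a set that can only shrink (modulo additions, which are themselves dominated by recent $L_i$'s and thus harmless within the window), and $m+1$ rounds of forced distinct exposures overflow the capacity-$m$ bound.

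\textbf{Executing the bookkeeping.} Concretely I would define, for each round $i$, the set $D_i \subseteq \mathcal{H}$ of labels present at the start of round $i$ that are \emph{not} dominated by any of $L_1,\dots,L_i$ — the ``still-dangerous'' labels. The claim to prove is $D_{m+2} = \emptyset$ (equivalently, after $m+1$ choices nothing dangerous remains). I would show $|D_{i+1}| \le |D_i| - 1$ whenever $D_i \neq \emptyset$: if $D_i$ is nonempty the hider must expose a witness $L'_i \in D_i$ against $L_i$; but $L'_i$ appears in the finder's queue in round $i+1$ (it was just enqueued), so $L'_i \prec_b L_{i+1}$ if $i+1 \le m+1$ — wait, $L'_i \prec_b L_{i+1}$ holds regardless — hence $L'_i \notin D_{i+1}$. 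Additions to $\mathcal{H}$ in round $i$ can only be the finder's label $L_i$ itself, which is trivially dominated by $L_i$, so $L_i \notin D_{i+1}$ either. Thus $D_{i+1} \subseteq D_i \setminus \{L'_i\}$, giving the strict decrease. Since $|D_1| \le |\mathcal{H}| \le m$, we get $D_{m+1} = \emptyset$, and in fact $D_{m+2}=\emptyset$, which is the lemma.

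\textbf{The main obstacle.} The delicate point — and where I expect to spend the most care — is handling the hider's freedom to \emph{add} $L$ to $\mathcal{H}$ while deleting something else, and the asymmetry that $\prec_b$ is only antisymmetric, not a total order: ``$H \not\prec_b L_i$'' does not mean ``$L_i \prec_b H$,'' so I cannot reason as if labels were linearly ordered. I need the queue-inclusion invariant stated just right: at the moment the finder computes $L_{i+1}$, the queue contains the label it played in round $i$ and the witness the hider exposed in round $i$, so by the guarantee of $\mathbf{Next}$ the new label $L_{i+1}$ strictly dominates both. Getting the off-by-one in the window length ($2m$ versus $m$ rounds, two enqueues per round) exactly right, and confirming that a witness exposed in round $i$ genuinely cannot reappear as a witness in round $i+1,\dots$ until it falls off the queue, is the crux; everything else is the pigeonhole bookkeeping above.
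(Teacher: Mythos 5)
Your high-level ingredients are the same as the paper's: the $2m$-slot queue retains, throughout the first $m+1$ choices, every label the finder has played and every witness it has received; $\mathbf{Next}$ makes each new label dominate everything in the queue; hence a witness can never be a label that was previously played or previously exposed. The genuine gap is in the bookkeeping step you use to finish, namely the claim that when $D_i\neq\emptyset$ the hider \emph{must} expose a witness $L'_i\in D_i$, which is what yields $|D_{i+1}|\le|D_i|-1$. The game only forces the hider to expose \emph{some} label of ${\cal H}$ that is not dominated by the current label $L_i$. Since $\prec_b$ is merely an antisymmetric relation (neither total nor transitive) and the guarantee of $\mathbf{Next}$ applies only to labels that were actually in the queue, a never-exposed original label $H'$ may happen to satisfy $H'\prec_b L_j$ for some earlier $j<i$ while $H'\not\prec_b L_i$. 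Such an $H'$ is a legal witness but lies outside $D_i$, so your potential need not decrease in that round and the induction to $D_{m+2}=\emptyset$ does not go through. The informal pigeonhole in your ``core argument'' has the same soft spot: counting ``$m+1$ distinct labels that were all present in ${\cal H}$ at various times'' against the capacity bound $|{\cal H}|\le m$ is not valid by itself, because ${\cal H}$ is dynamic and can contain more than $m$ distinct labels over a window of rounds.

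The repair — and this is essentially the paper's proof — is to change the potential from ``not dominated by any previous $L_j$'' to ``never yet enqueued'': every witness must be a label of the \emph{initial} set ${\cal H}_0$ that was never exposed before, because the only labels ever added to ${\cal H}$ are the finder's own labels, and those, like all previously exposed witnesses, are still in the queue during the first $m+1$ rounds (two enqueues per round, $2m$ slots) and are therefore dominated by the current $L_i$. Since $|{\cal H}_0|\le m$, at most $m$ rounds can produce witnesses; if the finder has not already won by round $m$, the queue used to compute $L_{m+1}$ contains all of ${\cal H}_0$ together with $L_1,\dots,L_m$, so $L_{m+1}$ dominates every label the hider holds. (A minor slip along the way: $L_i$ is not ``trivially dominated by $L_i$'' — $\prec_b$ is irreflexive — but it is dominated by $L_{i+1}$, which is all you need at that point.)
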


\begin{proofsketch}
A simple induction shows that when the finder chooses the $i$th new
label $i>0$, the $2i$ items in the front of the queue consist
of the first $i$ labels generated by the finder,
and the first $i$ labels revealed by the hider. 

Note that a response cannot expose a label that has been introduced or
previously exposed in the game since the finder always choose a label
greater than all labels in the queue, in particular these $2i$ labels. Thus, if the finder does not win when introducing the $m$th label,
all the $m$ labels that the hider had when the game started were 
exposed and therefore, stored in the queue of the finder together 
with all the recent $m$ labels introduced by the finder,
before the $m+1$st label is chosen.
Therefore, the $m+1$st label is larger than every label held by the 
hider, and the finder wins.
\end{proofsketch}

\subsection{Timestamps}

The complete simulation tags each value written with a
\emph{timestamp}---a pair $\langle{l,i}\rangle$,
where $l$ is an epoch chosen from a bounded domain ${\cal L}$
and $i$ is a sequence number
(an integer smaller than some bound $r$).

\section{A Bounded Labeling Scheme with Uninitialized Values}
\label{sec:boundedlabel}

Let $k > 1$ be an integer, and let $K = k^2+1$.
We consider the set $X=\{1,2,..,K\}$ and let ${\cal L}$
(the set of labels) be the set of all ordered pairs $(s,A)$
where $s \in X$ is called in the sequel the {\em sting} of $X$,
and $A \subseteq X$ has size $k$ and is called in the sequel
{\em Antistings} of $X$.
It follows that $|{\cal L}|={K \choose k} K = k^{(1+o(1))k}$.

The comparison operator $\prec_b$ among the bounded labels is
defined to be: [[i and j replaced]]
\[
(s_j,A_j) \prec_b (s_i,A_i) \equiv (s_j \in A_i) \wedge (s_i \not \in A_j)
\]

Note that this operator is antisymmetric by definition,
yet may not be defined for every pair $(s_i,A_i)$ and $(s_j,A_j)$
in ${\cal L}$ (e.g., $s_j \in A_i$ and $s_i \in A_j$).

We define now a function to compute, given a subset $S$ of at most $k$ labels of ${\cal L}$, a new label which is greater (with respect to $\prec_b$) than every label of $S$.
This function, called $\mathbf{Next}_b$ (see Figure \ref{f:UBC}) is
as follows.
Given a subset of $k$ label $(s_1,A_1)$, $(s_2,A_2)$, $\ldots$, $(s_k,A_k)$, we construct a label $(s_i,A_i)$ which satisfies:
\begin{itemize}
\item[--]
$s_i$ is an element of $X$ that is not in the union $A_1 \cup A_2 \cup \ldots \cup A_k$ (as the size of each $A_s$ is $k$, the size of the union is  at most $k^2$,
and since $X$ is of size $k^2+1$ such an $s_i$ always exists).
\item[--]
$A$ is a subset of size $k$ of $X$ containing all values $(s_1, s_2,\ldots, s_k)$ (if they are not pairwise distinct,
add arbitrary elements of $X$ to get a set of size exactly $k$).
\end{itemize}

\begin{figure*}[htb!]
\begin{scriptsize}
\centering
\begin{tabular}{|p{2.9in}||p{2.9in}|}
\hline&\\
\begin{minipage}[t]{2.5in}
\centering
{\it\bf Next$_b$}
\begin{tabbing}
X: \= d \= d \= d \= d \= d \= d \= d \= d \= \kill

\textbf{input:} $S =(s_1,A_1), (s_2,A_2), \ldots, (s_k,A_k)$:
                set of labels\\

\textbf{output:} $(s,A)$: label\\

\textbf{function:} For any $\emptyset\neq S\subseteq X$,
 $pick(S)$ returns arbitrary\\ (later defined for particular cases) element of $S$\\

1: \>$A := \{s_1\}\cup\{s_2\}\cup \ldots \cup \{s_k\}$\\
2: \>\emph{while} $|A|\neq k$\\
3: \> \> $A := A\cup\{pick(X\setminus A)\}$\\
4: \> \> $s := pick\left(X\setminus(A \cup A_1\cup A_2\cup \ldots \cup A_k)\right)$\\
5: \>\emph{return} $(s,A)$\\
\end{tabbing}
\end{minipage}
&

\begin{minipage}[t]{2.5in}
\centering
{\it\bf Next$_e$}
\begin{tabbing}
X: \= d \= d \= d \= d \= d \= d \= d \= d \= \kill

\textbf{input:} $S$: set of $k$ timestamps \\

\textbf{output:} $(l,i)$: timestamp\\

1: \>\emph{if} $\exists (l_0,j_0) \in S$ such that\\
   \> \> $\forall (l,j) \in S, (l,j)\neq (l_0,j_0),
                  (l,j) \prec_e (l_0,j_0) \wedge j_0 < r$ \\
2: \> \>\emph{then return} $(l_0, j_0+1)$\\
3: \>\emph{else return} $(\mathbf{Next}_b(\tilde{S}),0)$\\
\end{tabbing}

\end{minipage}\\[1ex]
\hline
\end{tabular}
\caption{Next$_b$ and Next$_e$.
$\tilde{S}$ denotes the set of labels appearing in $S$.}
\label{f:UBC} \end{scriptsize}
\vspace*{-0.5cm}
\end{figure*}

\begin{lemma}
Given a subset $S$ of $k$ labels from ${\cal L}$,
$(s_i,A_i)= \mathbf{Next}_b(S)$ satisfies:
\[
\forall (s_j,A_j)\in S, (s_j,A_j) \prec_b (s_i,A_i)
\]
\end{lemma}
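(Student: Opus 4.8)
The plan is to unfold the definition of $\prec_b$ and verify its two conjuncts separately, tracing through the four lines of $\mathbf{Next}_b$. Write $(s_i,A_i)=\mathbf{Next}_b(S)$, where $A_i$ is the set returned on line 5 and $s_i$ is the accompanying sting. Fix an arbitrary $(s_j,A_j)\in S$. By definition $(s_j,A_j)\prec_b(s_i,A_i)$ amounts to the two statements $s_j\in A_i$ and $s_i\notin A_j$, so it suffices to prove both of these for every member of $S$.

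The first statement, $s_j\in A_i$, is immediate from line 1, which sets $A:=\{s_1\}\cup\{s_2\}\cup\cdots\cup\{s_k\}$, together with the observation that lines 2--3 only add elements to $A$ and never remove any; hence the returned $A_i$ contains $\{s_1,\ldots,s_k\}$ and in particular $s_j$. The second statement, $s_i\notin A_j$, follows from line 4, which picks $s_i$ from $X\setminus\bigl(A\cup A_1\cup A_2\cup\cdots\cup A_k\bigr)$, so $s_i\notin A_1\cup\cdots\cup A_k\supseteq A_j$. Conjoining the two gives $(s_j,A_j)\prec_b(s_i,A_i)$, and since $(s_j,A_j)$ was an arbitrary element of $S$ this is the claim. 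Note the argument never needs $s_i\notin A_i$, only $s_i$ outside the antistings of $S$.

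The one point deserving care, which I would dispatch as a preliminary remark, is that $\mathbf{Next}_b$ is well defined on an input of at most $k$ labels: the loop on lines 2--3 terminates because whenever $|A|<k\le K=|X|$ the set $X\setminus A$ is nonempty, so $|A|$ strictly increases until it equals $k$; and line 4 is applied to a nonempty set because $|A_1\cup\cdots\cup A_k|\le k\cdot k=k^2<k^2+1=|X|$, which is exactly the counting already spelled out in the text just before the lemma. I do not expect any genuine obstacle here: the scheme is engineered so that each half of $\prec_b$ is forced by a distinct piece of the algorithm---$A_i$ absorbs all stings occurring in $S$, while $s_i$ avoids all antistings occurring in $S$---so the proof is essentially bookkeeping against the pseudocode plus this single pigeonhole bound on $|X|$.
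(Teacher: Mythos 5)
Your proof is correct and follows essentially the same route as the paper: the paper's proof is exactly ``by construction, $s_j \in A_i$ and $s_i \notin A_j$,'' which you simply spell out line by line, and your well-definedness remark (the pigeonhole bound $|A_1\cup\cdots\cup A_k|\le k^2 < K$) is the same counting the paper gives in the text describing $\mathbf{Next}_b$ just before the lemma. No gaps.
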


\begin{proof}
Let $(s_j,A_j)$ be an element of $S$.
By construction, $s_j\in A_i$ and $s_i \notin A_j$,
and the result follows from the definition of $\prec_b$.
\end{proof}

Note also that it is simple to compute $A_i$ and $s_i$
given a set $S$ with $k$ labels,
and can be done in time linear in the total length of the labels
given, i.e., in $O(k^2)$ time.
Since the number of labels $|{\cal L}|$ is $k^{(1+o(1))k}$, we have
that $k$ is $\frac{(1+o(1)) \log |{\cal L}|}{\log \log |{\cal L}|}$.

\paragraph*{Timestamps.}
\label{sec:enumerablelabel}

A \emph{timestamp} is a pair $(l,i)$ where $l$ is a bounded epoch,
and $i$ is an integer (sequence number),
ranging from $0$ to a fixed bound $r \geq 1$.

The $\mathbf{Next}_e$ operator compares between two timestamps,
and is described in Figure~\ref{f:UBC}.
Note that in line 3 of the code we use $\tilde{S}$ for the set of
labels (with sequence numbers removed) that appear in $S$.
The comparison operator $\prec_{e}$ for timestamps is:
\[
(x,i)\prec_e (y,j) \equiv x\prec_b y \vee ( x = y \wedge i < j )
\]

In the sequel, we use $\prec_b$ to compare timestamps,
according to their epochs only.

\section{Putting the Pieces Together}
\label{sec:gg}

Each processor $p_i$ holds, in $MaxTS_i$,
two fields $\langle ml_i, cl_i \rangle$,
where $ml_i$ is the timestamp associated with the last write of a value to the variable $v_i$
and $cl_i$ is a \emph{canceling timestamp} possibly empty ($\bot$),
which is not smaller than $MaxTS_i.ml$ in the $\prec_b$ order.
The canceling field is used to let the writer (finder in the game) to know an evidence.
A timestamp ($l,i$) is an evidence for timestamp
($l',j)$ if and only if $l \not\prec_b l'$.
When the writer faces an evidence it changes the current epoch.

The pseudo code for the read and write procedures appears in
Figure~\ref{f:wr}.
Note that in lines 2 and 9 of the {\sf write} procedure,
a label is enqueued if and only if it is not equal to
$MaxTS_0$.
Note further, that $Next_e$ in line 4 of the writer, first
tries to increment the sequence number of the label stored
in $MaxTS_0$ and if the sequence number already equals to the
upper bound $r$ then $p_0$ enqueues the value of $MaxTS_0$
and use the updated $epochs$ queue to choose a new value
for $MaxTS_0$, which is a new epoch Next$_b$($epochs$)
and sequence number $0$.

\begin{figure*}[htb!]
\begin{scriptsize}
\centering
\begin{tabular}{|p{2.9in}||p{2.9in}|}
\hline&\\
\begin{minipage}[t]{2.5in}
\centering
{\it\bf write$_0$($v$)}
\begin{tabbing}
XX: \= d \= d \= d \= d \= d \= d \= d \= d \= \kill

1: \>$\langle \langle ml_1,cl_1 \rangle, v_1 \rangle, \langle \langle ml_2,cl_2 \rangle, v_2 \rangle, \cdots :=${\sf QuorumRead}\\
2: \>$\forall i$, {\em if} $ml_i \neq MaxTS_0$ {\em then} enqueue($epochs,ml_i$)\\
3: \>$\forall i$, {\em if} $cl_i \neq MaxTS_0$ {\em then} enqueue($epochs,cl_i$)\\
4: \>{\em if} $\forall$ $l \in epochs$ $l \preceq_e MaxTS_0$ {\em then}\\
5: \>\>$MaxTS_0:=\langle Next_e$($MaxTS_0$,$epochs$), $\bot \rangle$\\
6: \>{\em else}\\
7: \>\>enqueue($epochs,MaxTS_0$)\\
8: \>\>$MaxTS_0:=\langle (Next_b(epochs),0), \bot \rangle$\\
9: \>{\sf QuorumWrite}($\langle MaxTS_0.ml, v\rangle$)\\
\\
Upon a request of {\sf QuorumWrite} $\langle l, v \rangle$\\
10: \>{\em if} $l \neq MaxTS_0$ {\em then} enqueue($epochs,l$)\\
\end{tabbing}
\end{minipage}
&
\begin{minipage}[t]{2.5in}
\centering
{\it\bf read}
\begin{tabbing}
XX: \= d \= d \= d \= d \= d \= d \= d \= d \= \kill
1: \>$\langle \langle ml_1,cl_1 \rangle, v_1 \rangle, \langle \langle ml_2,cl_2 \rangle, v_2 \rangle, \cdots :=${\sf QuorumRead}\\
2: \>\emph{if} $\exists m$ such that $cl_m=\bot$ {\em and}\\
3: \> \> ($\forall$ $i\neq m$ $ml_i \prec_e ml_m$ and $cl_i \prec_e ml_m$) then\\
4: \> \> \>{\sf QuorumWrite}$\langle ml_m, v_m \rangle$\\
5: \> \> \>return($v_m$)\\
6: \>\emph{else} return($\bot$)\\
\\
\\
\\
Upon a request of {\sf QuorumWrite} $\langle l, v \rangle$\\
7: \>{\em if} $MaxTS_i.ml \prec_e l$ and $MaxTS_i.cl \prec_e l$ {\em then}\\
8: \> \> $MaxTS_i:=\langle l, \bot \rangle$\\
9: \> \> $v_i:=v$\\
10: \> else {\em if} $l \not\prec_b MaxTS_i.ml$ and $MaxTS_i.ml \neq l$ \\  \hspace{1cm}{\em then} $MaxTS_i.cl:=l$\\
\end{tabbing}
\end{minipage}\\[1ex]
\hline
\end{tabular}
\caption{write($v$) and read.}
\label{f:wr}
\end{scriptsize}
\vspace*{-0.0cm}
\end{figure*}

The {\sf write} procedure of a value $v$ starts with a {\sf QuorumRead} of the $MaxTS_i$ variables, and upon receiving
answers $l_1,l_2,\ldots$ from a quorum, 
the writer $p_0$ enqueues to the {\em epochs} queue the epochs of the received $ml$ and non-$\bot$ $cl$ values, 
which are not equal to $MaxTS_0$ (lines 1-3).
The writer then computes $MaxTS_0$ to be the Next$_e$ timestamp, 
namely if the epoch of $MaxTS_0$ is the largest in the $epochs$ queue and the sequence number of $MaxTS_0$ less than $r$, 
then $p_0$ increments the sequence number of $MaxTS_0$ by one,
leaving the epoch of $MaxTS_0$ unchanged (lines 4-5). 
Otherwise, it is necessary to change the epoch:
$p_0$ enqueues $MaxTS_0$ to the {\em epochs} queue and applies 
$Next_b$ to obtain an epoch greater than all the ones in the $epochs$ queue;
it assigns to $MaxTS_0$ the timestamp made of this epoch 
and a zero sequence number (lines 7-8).
Finally, $p_0$ executes the {\sf QuorumWrite} procedure with 
$\langle MaxTS_0, v \rangle$ (line 9).

Whenever the writer $p_0$ receives (as a quorum member) a 
{\sf QuorumWrite} request containing an epoch that is not 
equal to $MaxTS_0$, $p_0$ enqueues the received label in {\em epochs} queue (line 10).

The {\sf read} procedure executed by a reader $p_i$ starts with a 
{\sf QuorumRead} of the $MaxTS_j$ and the (associated) $v_j$ variables (line 1).
When $p_i$ receives answers 
$\langle \langle ml_1, cl_1 \rangle, v_1 \rangle$, 
$\langle \langle ml_2, cl_2 \rangle, v_2 \rangle \ldots$ 
from a quorum,
$p_i$ tries to find a maximal timestamp $ml_m$ according to the
$\prec_e$ operator from among 
$ml_i$, $cl_i$, $ml_1$, $cl_1$, $ml_2$, $cl_2$ $\ldots$.
If $p_i$ finds such maximal timestamp $ml_m$, 
then $p_i$ executes the {\sf QuorumWrite} procedure with 
$\langle ml_m, v_m \rangle$.
Once the {\sf QuorumWrite} terminates 
(the members of a quorum acknowledged) 
$p_i$ assigns $MaxTS_i:=\langle ml_m, \bot \rangle$, 
and $v_i:=v_m$ and returns $v_m$ as the
value read from the register (lines 2-5).
Otherwise, in case no such maximal value $ml_m$ exists, 
the read is aborted (line 6).

When a quorum member $p_i$ receives a {\sf QuorumWrite} request $\langle l,v \rangle$, 
it checks whether both $MaxTS_i.ml \prec_b l$
and  $MaxTS_i.cl \prec_b l$.
If this is the case, then $p_i$ assigns 
$MaxTS_i:=\langle l, \bot \rangle$ 
and $v_i:=v$ (lines 7-9).
Otherwise, $p_i$ checks whether $l \not\prec_b MaxTS_i.ml$
and if so assigns $MaxTS_i.cl:=l$ (line 10).
Note that $\bot \prec_b l$, for any $l$. 

Note that we assume the existance of an underlying  data-link protocol that emulates FIFO links 
over a non-FIFO communication environment.
 In the following we assume that the data-link protocol also helps in 
repeatedly transmit the value of $MaxTS$ from one processor to another.
In case the $MaxTS_i.cl$ of a processor $p_i$ is $\bot$ and $p_i$
receives from a neighbor $p_j$ a $MaxTS_j$ such that
$MaxTS_j.ml \not\prec_b MaxTS_i.ml$ then $p_i$ assigns
$MaxTS_i.cl:=MaxTS_j.ml$, otherwise, when
$MaxTS_j.cl \not\prec_b MaxTS_i.ml$ then $p_i$ assigns
$MaxTS_i.cl:=MaxTS_j.cl$. Note also that the writer will enqueue every diffused value different from $MaxTS_0$. The code is identical to  line 9 in the writer code.

\subsection{Outline of Correctness Proof}
\label{sec:together}

The correctness of the simulation is implied by the game and 
our previous observations, which we can now summarize, recapping the arguments
explained in the the description of the individual components.

In the simulation, the finder/writer may introduce new epochs
even when the hider does not introduce an evidence.
We consider a timestamp ($l,i$) to be an evidence for timestamp
($l',j)$ if and only if $l \not\prec_b l'$.
Using large enough bound $r$ on the sequence number
(e.g., a 64-bit number),
we ensure that either there is a practically infinite
execution in which the finder/writer introduces new timestamps with no
epoch change, and therefore with growing sequence numbers,
and well-defined timestamp ordering,
or a new epoch is frequently introduced due to
the exposure of hidden unknown epochs.
The last case follows the winning strategy described for the game.

The sequence numbers allow the writer to introduce many (practically infinite) timestamps without storing all of them,
as their epoch is identical.
The sequence numbers are a simple extension of the bounded epochs
just as a least significant digit of a counter;
allowing the queues to be proportional to the bounded
number of the labels in the system.
Thus, either the writer introduces an epoch greater than
any one in the system,
and hence will use this epoch to essentially implement a
register for a practically unbounded period,
or the readers never introduce some existing bigger epoch 
letting the writer increment the sequence number infinitely often.
Note that if the game continues, while the finder is aware of
(a superset including) all existing epochs, and introduces a greater epoch, there is a practically
infinite execution before a new epoch is introduced.

In the scope of simulating a SWMR atomic register, following the
first write of a timestamp greater than any other timestamp in the
system, with a sequence number 0, to a majority quorum, any read
in a practically infinite execution,
will return the last timestamp that has been written to a quorum.
In particular, if a reader finds a timestamp introduced by
the writer that is larger than all other timestamps but not
yet completely written to a majority quorum, the reader assists
in completing the write to a majority quorum before returning
the read value.

The memory may stop operate while the set
of timestamps does not include a timestamp greater than the
rest. That is, read operations may be
repeatedly aborted until the writer writes
new timestamps. Moreover, a slow reader may store
a timestamp unknown to the rest (and in particular to
the writer) and eventually introduce the timestamp to the
rest.
In the first case the convergence of the
system is postponed till the writer is aware
of a superset of the existing timestamps. In the second case 
the system operate correctly,
implementing read and write operations, until
the timestamp unknown to the rest is introduced.

\begin{theorem}
The algorithm eventually reaches a period in which it
simulates a SWMR atomic register, for a number of operations
that is linear in $r$.
\end{theorem}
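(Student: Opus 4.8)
The proof will assemble the three components already developed — the quorum-based simulation, the guessing game / labeling scheme, and the timestamp extension — into a single convergence argument. The key structural fact is the monotone quantity $g(C_i)$ (the number of distinct values exceeding $MaxSeq_0$, now read as: distinct epochs in the system that are not $\prec_b$-dominated by $MaxTS_0$), which never increases and strictly decreases each time the writer discovers an evidence and opens a new epoch. The plan is first to show that the writer can open a new epoch only finitely many times: each new epoch is generated by $\mathbf{Next}_b$ applied to the \emph{epochs} queue, which by the queue-maintenance discipline (enqueue pushes to head, bounded size $2m$) and by Lemma~2 (the guessing-game lemma) eventually contains a superset of all epochs present in the system. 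Here $m = \Theta(cn^2)$ bounds the number of epochs simultaneously hidden in links plus local variables, using the bounded link capacity $c$ and the completeness of the graph; I would make this counting explicit, noting that every epoch in the system originates either from a $MaxTS$ variable at some processor or from an in-transit message, and all such values flow to the writer via the repeated diffusion of $MaxTS$ and via the {\sf QuorumRead} at the start of each {\sf write}.

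Second, I would argue that once the writer holds an epoch $l^\ast$ that is $\prec_b$-greater than every epoch in the system, no further epoch change occurs: by Lemma~3 (correctness of $\mathbf{Next}_b$) and the definition of evidence, no timestamp the writer ever receives can be an evidence against $l^\ast$, so line~4 of {\sf write} always takes the ``then'' branch and the writer only increments sequence numbers. Combined with the fact that readers and other processors only copy timestamps (never invent larger epochs — they may only surface an \emph{already existing} hidden epoch, which by the previous step cannot exceed $l^\ast$), the system is now in steady state with a single fixed epoch. From the first {\sf QuorumWrite} of $\langle (l^\ast,0),v\rangle$ to a majority quorum onward, the situation reduces exactly to the unbounded ABD simulation with an integer sequence number ranging in $[0,r]$: I would invoke the standard ABD-style argument (a read contacts a majority, hence intersects the quorum of the latest completed write, picks the $\prec_e$-maximal timestamp, and writes it back to a majority before returning) to establish the regularity property and the no-new/old-inversion property, giving atomicity. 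This yields a window of $\Theta(r)$ write operations — the number of sequence-number values available within the fixed epoch — during which atomicity holds, matching the claimed ``linear in $r$'' bound; for the ``practically infinite'' reading one takes $r = 2^{64}$.

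The main obstacle is the interleaving of the two failure modes in an \emph{arbitrary} initial configuration, and in particular handling a slow reader (or a slow in-transit message) that carries a stale or incomparable epoch unknown to the writer: such a value can surface arbitrarily late and, via the canceling field $cl$ and the diffusion rules, register as a fresh evidence, forcing yet another epoch change. The argument that this can happen only a bounded number of times rests entirely on (i) the queue in the finder's strategy being large enough ($2m$) to retain every such surfacing label long enough for $\mathbf{Next}_b$ to dominate it, and (ii) the bounded-capacity assumption ensuring only $O(cn^2)$ such labels exist at start — so the delicate step is the precise accounting that ties $m$ to the system's total state size and the verification that the enqueue discipline in lines 2,3,7,10 of {\sf write} (and the data-link diffusion of $MaxTS$) indeed feeds every hidden label into the queue before the $(m{+}1)$st epoch is chosen, so that Lemma~2 applies verbatim. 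A secondary subtlety is ruling out livelock in which reads are \emph{repeatedly aborted} (line~6 of {\sf read}) forever: one must observe that in steady state the writer's $MaxTS_0$ is $\prec_e$-maximal and $cl$-free at a majority, so the reader's test in lines 2--3 eventually succeeds, which again follows from quorum intersection once the epoch has stabilized.
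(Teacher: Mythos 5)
Your proposal is correct and follows essentially the same route as the paper: convergence via the guessing-game/labeling lemma with the $O(cn^2)$ bound on hidden epochs, then a steady-state ABD-style quorum-intersection argument (regularity plus no new/old inversion, with the reader write-back and the non-abort observation) yielding an atomic window of $\Theta(r)$ operations within a single fixed epoch. The subtleties you flag — late-surfacing hidden labels from slow readers and the abort livelock — are exactly the points the paper handles in its outline and annex lemmas.
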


Each {\sf read} or {\sf write} operation requires $O(n)$ messages.
The size of the messages is linear in the size of a timestamp,
namely the sum of the size of the epoch and $\log r$.
The size of an epoch is $O( m log m)$ where $m$ is the
size of the {\em epochs} queue,
namely, $O(c n^2)$,
where $c$ is the capacity of a communication link.

Note that the size of the {\em epochs} queue, and with it, the size
of an epoch, is proportional to the number of labels that
can be stored in a system configuration.
Reducing the link capacity will reduce the number of labels
that can be ``hidden'' in the communication links.
This can be achieved by using a stabilizing \emph{data-link}
protocol,\cite{dolev}, 
in a manner similar to the ping-pong mechanism used in~\cite{ABD95}.

\remove{
The data-link protocol, described in the next section,
allows to repeatedly transmit the value of $MaxTS$,
from one processor to another.
In case the $MaxTS_i.cl$ of a processor $p_i$ is $\bot$ and $p_i$
receives from a neighbor $p_j$ a $MaxTS_j$ such that
$MaxTS_j.ml \not\prec_b MaxTS_i.ml$ then $p_j$ assigns
$MaxTS_i.cl:=MaxTS_j.ml$, otherwise, when
$MaxTS_j.cl \not\prec_b MaxTS_i.ml$ then $p_j$ assigns
$MaxTS_i.cl:=MaxTS_j.cl$.

\section{Stabilizing data link over bounded non-fifo link}
\label{sec:dl}

One would try to design a self-stabilizing data link protocol over such non-fifo bounded communication channels, to ensure that eventually only one timestamp is affectively sent and received through the channel. In this way the $c$ factor in calculating $m$ is removed.

We assume in the following that each edge $(i,j)$ is composed of two virtual directed edges $(i,j)$ and $(j,i)$ and the capacity of each link is $c$. The links are non-fifo and the delivery time is unbounded. The links are \emph{weakly fair} in the sense that (if both the sender and the received are not crashed) if the sender sends an infinite number of packets,\footnote{
In the sequel, we use the term {\em packet} to represent low level messages repeatedly transmitted by a data link algorithms in order to transfer high level messages.}
then the receiver receives an infinite number of packets. Sending a packet to a link whose capacity is exhausted results in loosing a packet (either a packet in the link or the packet being sent).

The concept used in the design of the data link protocol is to let the sender use a mechanism based on the capacity so that the sender can ensure the execution of an operation in the receiver side. Roughly, the receiver acts only upon receiving a packet from the sender.
The sender may repeatedly send a particular packet, and in each time the receiver receives
a packet it acknowledges the packet arrival. When the sender receives $3c+2$ such acknowledgments, the sender is sure that the receiver received at least $2c+2$ copies
of the packet that are currently sent. Assume that whenever the receiver receives $2c+2$ copies of a packet the receiver delivers the content of the packet and resets its counting
while remembering the last packet that cause the delivery. Once the sender gets the $3c+2$ acknowledgments the sender is ready to send a different packet (an alternating bit value can ensure that identical consecutive messages are carried by different packets).

To have two data links
for a particular pair of nodes, $i$, $j$, we may logically separate each packet into two portions.
The first portion of the packet transmitted from node $i$ to node $j$ serves the data link in which
$i$ is the sender and $j$ is the receiver, and the second portion of the packet serves the
data link in which $i$ is the receiver and $j$ is the sender. In fact it is possible to define
independent tracks for each unti-directed data-link, such that each track is using a designated
field in the appropriate portion of the message. More details are omitted from this extended abstract.

\begin{figure*}[htb!]
\begin{scriptsize}
\fbox{
\begin{minipage}{5.8in}
\noindent
\textbf{Queue operations:}\\
The $[]$ operator takes a message $m$ and a boolean $b$
as operands, and either enqueues $(m,ab,0)$ (if $(m,ab,*)$
is not present in $Q$, then if the queue contained $c+1$
elements, the last element of the queue is dequeued) or return
a pointer to the $count$ value associated to $m$ and $ab$ in $Q$.
Any time a tuple value is changed in the queue, this tuple is
replaced at the top of the queue, and the size of the queue does
not change. The $\bot$ assignment to a queue $Q$ denotes the fact that $Q$ is emptied.
\vspace*{0.1cm}
\end{minipage}
}
\end{scriptsize}
\end{figure*}

\begin{figure*}[htb!]
\begin{scriptsize}
\centering
\begin{tabular}{|p{2.8in}||p{2.8in}|}
\hline&\\
\begin{minipage}[t]{2.4in}
\centering
{\it\bf Send}
\begin{tabbing}
X: \= d \= d \= d \= d \= d \= d \= d \= d \= \kill
\textbf{parameter:}\\
$d$: destination node\\

\textbf{input:}\\
$m$: message to be sent\\

\textbf{persistent variables:}\\
$ab$: alternating bit value\\
$ack$: integer denoting the number of acknowledgments\\
received for the current $ab$ value\\
\\
1: \>$ab := \neg ab$\\
2: \>$ack := 0$\\
\\
3: \>\emph{while} $ack < 3c+2$ \\
4: \> \>\textbf{SendPacket} $(m,ab)$; \\
6: \> \>\emph{upon} \textbf{ReceivePacket} $(ack,(m,ab))$ \\
7: \> \> \>$ack := ack+1$; \\
\\
8: \>\textbf{DeliverAck} $m$ \\
\end{tabbing}
\end{minipage}
&

\begin{minipage}[t]{2.4in}
\centering
{\it\bf Receive}
\begin{tabbing}
X: \= d \= d \= d \= d \= d \= d \= d \= d \= \kill

\textbf{parameter:}
$o$: origin node\\

\textbf{persistent variables:}\\
\begin{minipage}{2.8in}
$last\_delivered$: boolean that states the alternating bit
value of the last delivered message\\
$Q$: queue of size $c+1$ of $3$-tuples $(m,ab,count)$,
where $m$ is a message, $ab$ is an alternating bit value,
and $count$ is an integer denoting the number of packets
$(m,ab)$ received for the corresponding $m$ and $ab$
since the last \textbf{DeliverMessage}$_o$ occurred.
\vspace*{0.1cm}
\end{minipage}
\\
1: \>\emph{upon} \textbf{ReceivePacket} $(m,ab)$ \\
2: \> \>$Q[m,ab] := min(Q[m,ab]+1,max)$\\
3: \> \>\emph{if} $Q[m,ab] \geq c+1$ \emph{then} \\
4: \> \> \>\emph{if} $last\_delivered \neq ab$ \emph{then}\\
5: \> \> \> \>\textbf{DeliverMessage}$_o$ ($m$)\\
6: \> \> \> \>$last\_delivered := ab$\\
7: \> \> \>$Q := \bot$ \\
8: \> \>\textbf{SendPacket} $(ack,(m,ab))$\\
\end{tabbing}

\end{minipage}\\[1ex]
\hline
\end{tabular}
\caption{Data Link Algorithm.}
\label{f:DL} \end{scriptsize}
\vspace*{-0.5cm}
\end{figure*}

\noindent
{\bf Proof of correctness.}

\begin{theorem}
Let $p_i$ and $p_j$ be two neighboring nodes. In every execution starting from any arbitrary configuration, the only message sent with \textbf{Send}$_d$ that may not be delivered exactly once with \textbf{DelivreMessage}$_j$ is the first one. Also, at most once message that was not sent can be delivered.
\end{theorem}

\begin{proof}
Let $E$ be a particular execution starting from an arbitrary configuration. Let $m_1$ be the first message sent by $p_i$ using \textbf{Send}$_d$ in $E$. This messages implies that the receiver executes $Q:=\bot$ before \textbf{Send}$_d$ returns. The reason is as follows. The sender must get $3c+2$ acknowledgments with expected contents $(ack,(m_1,b))$. If such $3c+2$ acknowledgments are indeed received, this implies that the receiver issued at least $2c+2$ of those acknowledgments, and thus received $2c+2$ packets $(m_1,b)$. Consider the first such packet $(m_1,b)$ received by the receiver. If there is no reset of the receiver's queue following this packet, the head of the queue now contains en entry $(m_1,b,x)$ that can not be deleted until the receiver resets the entire queue. Indeed, at most $c$ packets are initially present in the receiver's input channel, that can create at most $c$ entries in the queue. Since the queue is of size $c+1$, the $(m_1,b,*)$ tuple remains. Now, if the receiver sends $c+1$ packets $(ack,(m_1,b))$, it implies that the receiver's queue for entry $(m_1,b,*)$ was incremented $c+1$ times, inducing a reset of the queue if such a reset did not happen while receiving the last $c+1$ $(m_1,b)$ packets that triggered acknowledgments.

Consider the second message $(m_2,b')$ that is sent with \textbf{Send}$_d$. Following the first reset, each entry in the receiver's queue now never goes beyond $c$ except for the entry related to $(m_2,b')$ and possibly the entry related to $(m_1,b)$. The reason is that there are at most $c$ dangling messages in the input channel of the receiver, and that the previous reset of the queue has nullified all values. Now, since the sender receives $3c+2$ $(ack,(m_2,b'))$ packets from the receiver, in turn the receiver must receive $2c+2$ $(m_2,b')$ packets. In turn, this implies that the receiver executes at least two resets, exactly one of them being related to the delivery of $m_2$. The first reset could be related to $(m_1,b)$ (but then the $last\_delivered$ bit guarantees that the message $m_1$ is not delivered), or to $(m_2,b')$ (and then the message $m_2$ is delivered and no reset can then be related to $(m_1,b)$).

Note that after the second message is delivered, the $last\_delivered$ bit of the receiver matches that of the sender.
\end{proof}

\paragraph*{Simultaneous implementation of
            two anti-parallel data-links.}

To have two data links for a particular pair of nodes, $i$, $j$, we may logically separate each packet into two portions.
The first portion of the packet transmitted from node $i$ to node $j$ serves the data link in which
$i$ is the sender and $j$ is the receiver, and the second portion of the packet serves the
data link in which $i$ is the receiver and $j$ is the sender. In fact it is possible to define
independent tracks for each unti-directed data-link, such that each track is using a designated
field in the appropriate portion of the message.\\
}


\section{Conclusion}

We have presented a self-stabilizing simulation of a single-writer
multi-reader atomic register, in an asynchronous message-passing
system in which at most half the processors may crash.

Given our simulation, it is possible to realize a self-stabilizing
\emph{replicated state machines} \cite{L98}.
The self-stabilizing consensus algorithms presented in \cite{DKS06}
uses SWMR registers, and our simulation allows to port them to
message-passing systems.
More generally, our simulation allows the application of any
self-stabilizing algorithm that is designed using SWMR registers
to work in a message-passing system,
where at most half the processors may crash.

Our work leaves open many interesting directions for future research.
The most interesting one is to find a stabilizing simulation, 
which will operate correctly even after sequence numbers wrap around, without an additional convergence period. 
This seems to mandate a more carefully way to track epochs [[]],
perhaps by incorporating a self-stabilizing analogue of the
\emph{viability} construction~\cite{ABD95}.

\paragraph*{Acknowledgments.}
We thank Ronen Kat and Eli Gafni for helpful discussions.

\newpage

\newpage
\section*{Anexes}

\begin{lemma}
\label{lem:hidden-revealed}
Every execution has an infinite suffix where every hidden timestamp is eventually revealed 
to the writter or stays hidden forever (not revealed neither to the writter nor to a reader) .
\end{lemma}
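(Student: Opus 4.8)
The plan is to classify, in a fixed execution $E$, every timestamp that is ``hidden'' at some configuration --- meaning it resides in a local variable ($MaxTS_i.ml$ or $MaxTS_i.cl$) of a slow reader, or in transit over a communication link, but is not among the timestamps currently held by the writer or by any reader that is about to communicate it. For each such timestamp $t$, exactly one of two things happens along $E$: either the process (or link) carrying $t$ eventually participates in a {\sf QuorumRead} answered to the writer, or in a diffusion step that reaches the writer, so that the writer enqueues $t$ into its {\em epochs} queue (and symmetrically it may reach a reader through {\sf QuorumRead}); or $t$ is carried only by processes/links that, from some point on, never successfully deliver it to anyone who communicates with the writer or a reader. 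The first disjunct is ``revealed to the writer (or to a reader)''; the second is ``stays hidden forever''. The key point is that these are the only two possibilities, because the set of locations a given timestamp can occupy only shrinks unless the timestamp is actively re-diffused, and re-diffusion is exactly the event that reveals it.

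First I would make precise the notion of a timestamp being ``hidden at $C$'' and ``revealed at $C'$'', tying revelation to the enqueue operations in lines 2--3 and 10 of {\sf write}$_0$ and to a reader seeing it in line 1 of {\sf read}. Second, I would argue that the number of distinct timestamps appearing \emph{anywhere} in a configuration is bounded: there are $n$ processors each holding $O(1)$ timestamps, and $2(n^2-n)$ link-buffers each of capacity $c$, so at most $O(cn^2)$ distinct timestamps can be present in any single configuration, and across the whole execution no \emph{new} hidden timestamp can ever be created --- the writer only creates timestamps it immediately owns (hence not hidden), readers only copy timestamps they have just seen, and links only carry timestamps that were sent. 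So the set of ever-hidden timestamps is finite. Third, for each timestamp $t$ in this finite set, consider the (monotone non-increasing, modulo re-diffusion) ``support'' of $t$: the set of processes and link-buffers holding $t$. If at any point a process in the support of $t$ performs a {\sf QuorumRead}-answer or a diffusion that reaches the writer or a reader, $t$ is revealed; otherwise, since the support cannot grow without such a communication and the execution is infinite, after finitely many configurations the fate of $t$ is sealed --- it is either already revealed or confined to a region that never again communicates $t$ outward, i.e.\ it stays hidden forever.

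The main obstacle, and the step I would spend the most care on, is the ``stays hidden forever'' case: I must rule out the scenario where $t$ is perpetually \emph{almost} revealed --- e.g.\ a reader holds $t$ and keeps being polled by {\sf QuorumRead}s but its answers are always among the minority that the quorum does not wait for, or $t$ sits in a link $(i,j)$ and $p_j$'s {\sf receive} keeps returning the null message. The model permits both (unbounded delay for any particular message, and quorums of a strict majority), so such a timestamp genuinely can remain hidden; the lemma's dichotomy is ``revealed \emph{or} hidden forever,'' and this scenario falls into the latter branch. Thus the delicate part is not excluding it but correctly phrasing the infinite suffix: I would take the suffix to begin after the last configuration at which any \emph{new} reveal event (first enqueue of some $t$ by the writer, or first observation of $t$ by a reader) occurs --- this exists because the ever-hidden set is finite and each $t$ is revealed at most ``for the first time'' once. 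In that suffix every hidden $t$ is either already revealed or, by the support-monotonicity argument, confined forever, which is exactly the claim. Finally I would note that the infinitely-often-writing hypothesis on the writer is what guarantees the suffix is still infinite and that {\sf QuorumRead}s keep occurring, so the classification above is not vacuous.
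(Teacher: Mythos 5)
There is a genuine gap: your dichotomy is not the lemma's dichotomy. You split the fate of a hidden timestamp $t$ into ``eventually revealed to the writer \emph{or to a reader}'' versus ``never delivered to anyone,'' and you even explicitly place the scenario of a reader that holds $t$ but is always outside the answering majority into the ``hidden forever'' branch. But the lemma's second branch is ``not revealed \emph{neither to the writer nor to a reader}''; the whole nontrivial content of the statement is that the middle case --- $t$ becomes known to some reader yet never reaches the writer --- cannot persist. Stated your way, the claim is close to a tautology (``revealed to somebody or to nobody''), and it is too weak for the way the lemma is used later (Lemma~\ref{lem:maxtimestamp} needs that every timestamp that surfaces anywhere is eventually \emph{enqueued by the writer}, so that $\mathbf{Next}_b$ is applied to a superset of the live epochs).

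The missing idea is the canceling-field propagation mechanism, which is exactly what the paper's proof is about. When a clean reader $p_i$ (one with $MaxTS_i.cl=\bot$) is exposed to a timestamp $l$ with $l \not\prec MaxTS_i.ml$, it records $l$ in $MaxTS_i.cl$ (line 10 of the reader's {\sf QuorumWrite} handler, and the analogous rule for $MaxTS$ values diffused by the data-link). From there the evidence travels to the writer: the writer writes infinitely often, and each {\sf write} begins with a {\sf QuorumRead} that collects both $ml$ and $cl$ fields from a majority and enqueues every value different from $MaxTS_0$ (lines 2--3), while the underlying data-link repeatedly retransmits $MaxTS$ values between neighbors so the canceling value also spreads from reader to reader and to the writer (line 10 of the writer's code). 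This is the argument that converts ``revealed to a reader'' into ``eventually revealed to the writer,'' and without invoking it (or some substitute) your support-monotonicity classification does not prove the lemma as stated. Your bookkeeping about the finite number of ever-hidden timestamps ($O(cn^2)$) and the choice of the suffix after the last first-reveal event is fine and is implicitly used by the paper too, but it is the easy part.
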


\begin{proof} 
Consider an execution where a timestamp is not revealed directly 
to the writter but to some clean reader (a reader with canceling setted to $\bot$).
The other cases are trivial.
Let $l$ be the timestamp and $i$ be the reader.
Following the description of the code piggy-backed by the data-link 
then $i$ compares $MaxTS_i.ml$ with $l$. If $l \not \prec_e MaxTS_i$ then 
$MaxTS_i.cl$ is setted to $l$.  Then, either the writter contacts the reader via a QuorumRead 
and gets the canceling field or the reader is contacted by another clean reader and 
the canceling is propagated. Eventually, the writter will get the 
canceled timestamp and enqueues it.  
\end{proof}

\begin{lemma}
\label{lem:maxtimestamp}
Each infinite execution has an infinite suffix where every QuorumRead invocation by a reader
returns a maximum clean timestamp.
\end{lemma}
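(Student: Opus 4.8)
The plan is to first use the epoch-labeling game together with Lemma~\ref{lem:hidden-revealed} to show that the writer eventually commits to an epoch dominating all others, then to argue that all stale timestamps get flushed out of the system, and finally to close with a quorum-intersection argument in the spirit of Section~\ref{sec:together}.

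\emph{Epoch stabilization.} Since the writer writes infinitely often, the finder of the guessing game makes infinitely many moves. A configuration stores at most $O(cn^{2})$ timestamps, so by Lemma~\ref{lem:hidden-revealed} only finitely many hidden timestamps are ever revealed, while every other hidden timestamp reaches neither the writer nor any reader and hence never appears in a {\sf QuorumRead} result. After the writer has enqueued all the revealed ones into $epochs$ --- which happens after finitely many writes, via the {\sf QuorumRead} at the head of {\sf write} and the canceling fields delivered to it by Lemma~\ref{lem:hidden-revealed} --- the hider has no surprises left, so by the winning-strategy lemma for the guessing game the writer generates an epoch that is $\prec_{b}$-greater than every epoch then present in the system, and afterwards changes its epoch only when a sequence number reaches $r$, each time to a still-larger dominating epoch. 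Hence there is a time $T_{1}$ after which $MaxTS_{0}$ is clean, $MaxTS_{0}$ is $\succeq_{e}$ every timestamp held by a processor or in transit (nobody else introduces new timestamps, and the writer's sequence numbers increase within an epoch), and the epoch of $MaxTS_{0}$ $\prec_{b}$-dominates every other epoch in the system.

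\emph{Flushing stale timestamps.} I would next observe that each link has capacity at most $c$ and that the stabilizing data-link retransmits $MaxTS$ along every operational edge while the writer keeps issuing fresh {\sf QuorumWrite} requests; hence after a finite prefix every in-transit message was sent after $T_{1}$, so it carries a timestamp whose epoch is the writer's current (dominating) epoch with sequence number no larger than the writer's. Moreover, every {\sf QuorumWrite} request is delivered, by the data-link, to \emph{every} processor, so by line~7 of the {\sf QuorumWrite} handler each processor's $MaxTS$ is overwritten by the writer's fresh value with $cl$ reset to $\bot$ infinitely often, and after $T_{1}$ any value assigned to a $cl$ field is $\preceq_{e}$ the accompanying $ml$. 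Putting these together, there is a time $T\ge T_{1}$ after which the value $ts^{\dagger}$ of the writer's most recently completed {\sf QuorumWrite} is, at every later moment, stored cleanly by a majority of processors and is $\succeq_{e}$ every other timestamp in the system. \emph{This is the step I expect to be the main obstacle}: one must exclude a slow reader, or a long-delayed message, that keeps alive a garbage epoch or a same-epoch timestamp with a far smaller sequence number (recall that the $MaxTS$-diffusion rule may copy such a value into a $cl$ field), and this is exactly where bounded link capacity, continual retransmission, and Lemma~\ref{lem:hidden-revealed} are needed.

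\emph{Quorum intersection.} Finally, consider any {\sf QuorumRead} invoked by a reader after $T$, returning the states of a read quorum $Q_{r}$, and let $ts^{\dagger}=\langle(\ell,s^{\dagger}),v^{\dagger}\rangle$ with write quorum $Q_{w}$ be as above. Since quorums are majorities, $Q_{r}\cap Q_{w}\neq\emptyset$, so some $p_{j}\in Q_{r}$ reports $ml_{j}\succeq_{e}(\ell,s^{\dagger})$ with $cl_{j}=\bot$, while every other $p_{i}\in Q_{r}$ reports $ml_{i}$ and $cl_{i}$ with epoch $\preceq_{b}\ell$ and, when that epoch equals $\ell$, sequence number no larger than the writer's --- hence $\preceq_{e} ml_{j}$. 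Thus $ml_{j}$ is a clean timestamp maximal among all $ml_{i}$ and $cl_{i}$ returned, which is precisely the condition evaluated in lines~2--3 of {\sf read}; the {\sf QuorumRead} therefore returns a maximum clean timestamp. Since this holds for every reader invocation in the infinite suffix starting at $T$, the lemma follows.
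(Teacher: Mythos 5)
Your argument follows essentially the same route as the paper's own proof sketch: by Lemma~\ref{lem:hidden-revealed} and the boundedness of the number of hidden timestamps, the writer enqueues in finite time every timestamp that will ever be revealed; the next write then produces a (clean) timestamp dominating everything that can ever surface; and every subsequent {\sf QuorumRead} captures that value and finds all other returned values dominated by it. The paper compresses your second and third paragraphs into the single assertion that ``any QuorumRead invoked after the execution of this write will get this timestamp,'' so the step you flag as the main obstacle --- ruling out stale same-epoch values or canceling-field pollution surviving arbitrarily long --- is not treated in the paper's sketch either; flagging it does not put you behind the paper's level of rigor, and your quorum-intersection paragraph is in fact more explicit than the original. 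Two small corrections: {\sf QuorumWrite} is only guaranteed to reach and be acknowledged by a majority, not every processor, so the ``delivered to every processor'' claim should be dropped --- your intersection argument only needs the write quorum together with the fact, already provided by your epoch-stabilization step, that every timestamp present in the system is dominated by the writer's current one; and the clean maximum a reader sees may belong to the writer's in-progress write rather than to the last completed one, which is harmless for the statement being proved.
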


\begin{proof}
We prove in the following that  the prefix where QuorumRead invocation by a reader 
returns either  canceled timestamps or timestamps that do not have a clean maximum is finite.
The proof is by construction.
Every write operation  invokes a QuorumWrite with a clean timestamp that is greater than any timestamp the writter is aware of. Therefore, every QuorumRead invoked after the QuorumWrite invocation captures this value. According to Lemma \ref{lem:hidden-revealed} every hidden timestamp is eventually either revealed to the writter and enqueued or stays hidden. Since the number of hidden values is bounded, the writter enqueues these values in a finite time. Consider the execution after the writer enqueues the last hidden value. The next write operation produces a timestamp that is greater than any timestamp that will be ever revealed in the execution and any QuorumRead invoked after the execution of this write will get this timestamp. 
\end{proof}

\begin{lemma}
\label{lemma:abort}
Each execution of the system has an infinite suffix where reads do
not abort.
\end{lemma}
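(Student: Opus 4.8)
The plan is to build on Lemma~\ref{lem:maxtimestamp}, which gives an infinite suffix in which every {\sf QuorumRead} by a reader returns a set of answers containing a clean timestamp that is maximum (with respect to $\prec_e$) among all the $ml_j$ and $cl_j$ values received. I would first argue that in such a suffix the writer's behaviour has stabilized in the following sense: the writer repeatedly performs {\sf QuorumWrite}$\langle ml,v\rangle$ where $ml$ is clean (canceling field $\bot$) and $\prec_b$-maximal among all timestamps present anywhere in the accessible part of the system, and no further epoch change occurs (this is exactly the ``finder wins'' situation from the game, combined with Lemma~\ref{lem:hidden-revealed} telling us all hidden timestamps have been revealed and enqueued). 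Hence after the last epoch change, only finitely many distinct timestamps with that final epoch can appear before the writer has propagated its own, and from some point on every value written to a quorum carries the writer's current maximal clean timestamp.

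Next I would examine the abort condition in the {\sf read} code: a read aborts (line~6) only when there is no index $m$ with $cl_m=\bot$ and $ml_m$ dominating every other received $ml_i$ and $cl_i$ under $\prec_e$. So it suffices to show that, far enough in the suffix, every {\sf QuorumRead} performed inside a {\sf read} returns at least one such dominating clean entry. Lemma~\ref{lem:maxtimestamp} already delivers a clean timestamp $ml_m$ that is maximum among the received $ml_j$'s and $cl_j$'s; I need to check that this $ml_m$ also dominates the reader's own stored $MaxTS_i.ml$ and $MaxTS_i.cl$, since the read's test ranges over $ml_i,cl_i$ as well. For this I would use the quorum intersection property together with the monotonicity facts established for the basic simulation: once the writer's maximal clean timestamp has been written to a majority quorum, any subsequent {\sf QuorumRead} intersects that quorum and therefore returns that timestamp; and a reader's stored $MaxTS_i$ is always either $\bot$-canceled and $\prec_b$-below some timestamp that was written to a quorum, or itself a timestamp that has been written to a quorum, hence never exceeds the writer's current maximal one. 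I would also invoke the argument (sketched in the correctness outline) that a slow reader holding a timestamp unknown to the rest can do so only finitely often, so eventually no reader's local state can defeat the dominating entry.

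The main obstacle I expect is ruling out the pathological interleaving in which a reader's own canceling field $cl_i$, set via the data-link piggybacking or via line~10 of the {\sf QuorumWrite} handler, is $\prec_e$-incomparable to the otherwise-dominating clean $ml_m$ and thereby forces an abort. Handling this requires showing that after stabilization every $cl$ value appearing anywhere is already $\prec_b$-dominated by the writer's final maximal epoch --- i.e., that no ``evidence'' against the final timestamp survives. This follows from Lemma~\ref{lem:hidden-revealed} (all such evidence is revealed and enqueued, triggering the last epoch change) plus the fact that the writer's $\mathbf{Next}_b$ output is $\prec_b$-greater than everything in its $epochs$ queue, which by that point is a superset of all epochs in the system; once the resulting timestamp is written to a quorum, line~7's test $MaxTS_i.ml\prec_e l \wedge MaxTS_i.cl\prec_e l$ succeeds everywhere it is received, resetting canceling fields to $\bot$ and preventing the reintroduction of stale $cl$ values. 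Assembling these pieces, I would conclude that beyond some configuration every {\sf read} finds its witness $m$ and completes without aborting, giving the desired infinite suffix. $\Box$
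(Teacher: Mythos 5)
Your proposal follows essentially the same route as the paper: the paper's own proof simply invokes Lemma~\ref{lem:maxtimestamp} (every {\sf QuorumRead} in the suffix returns a maximum clean timestamp) and concludes that the tests in lines 2--3 of the reader's code succeed, so no read returns $\bot$. The extra work you do --- checking that the dominating clean timestamp also beats the reader's own $MaxTS_i$ fields and that no stale canceling value survives after the final epoch is quorum-written --- is detail the paper's two-sentence sketch leaves implicit, and it is consistent with (indeed strengthens) the intended argument.
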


\begin{proof}
According to Lemma \label{lem:maxtimestamp} every execution has an infinite suffix where each QuorumRead invocation returns a maximum clean timestamp. It follows that for every read invocation,  
the conditions in lines 2 and 3 (reader's code) are satisfied and the value returned by the read is not $\bot$.
\end{proof}

\begin{lemma}
\label{lemma:regularity}
Any execution of the system has an infinite suffix that satisfies the
regularity property.
\end{lemma}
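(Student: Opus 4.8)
The plan is to reason entirely within the infinite suffix obtained by intersecting those of Lemmas~\ref{lem:hidden-revealed}, \ref{lem:maxtimestamp} and~\ref{lemma:abort} (and beginning no earlier than the point, identified in the proof of Lemma~\ref{lem:maxtimestamp}, at which the writer has enqueued the last hidden value and performed the first \textsf{write} whose timestamp $\prec_e$-dominates every timestamp that will ever appear). In this suffix: no hidden timestamp is ever revealed; every \textsf{QuorumRead} by a reader returns a maximal timestamp that is clean; no read aborts; and the writer's successive timestamps are strictly $\prec_e$-increasing (incrementing the sequence number within a fixed epoch, or, when $r$ is exhausted, jumping to a fresh larger epoch via $\mathbf{Next}_b$). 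I would first record two structural facts. \textbf{(i)} Every timestamp that is $\succeq_e$ the first post-stabilization write-timestamp and that ever appears at any processor in the suffix was generated by a \textsf{write} operation lying inside the suffix; all stale or corrupted timestamps are strictly $\prec_e$-smaller. \textbf{(ii)} The value carried by such a timestamp is uniquely determined: the writer sets $MaxTS_0=\langle ts,\bot\rangle$ and only then calls \textsf{QuorumWrite}$(\langle ts,v\rangle)$; a quorum member copies the pair $\langle ts,v\rangle$ into $\langle MaxTS_i,v_i\rangle$ in a single step (lines~7--9 of the reader's \textsf{QuorumWrite} handler); and a reader that assists re-propagates the pair $\langle ml_m,v_m\rangle$ it just read, so the correspondence timestamp~$\mapsto$~value travels unchanged, and since suffix timestamps are pairwise distinct there is no clash with residual values.

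Next I would fix an arbitrary \textsf{read} $r$ in the suffix, let $w$ be the most recent \textsf{write} that happens before $r$, and write $ts$ for its timestamp and $v$ for its value (if no \textsf{write} happens before $r$, the final step below already shows $r$ returns a value written by a concurrent \textsf{write}). Since $w$ returned, a majority $Q_w$ acknowledged its \textsf{QuorumWrite}; for each $p_i \in Q_w$, because $ts$ $\prec_e$-dominates everything in the suffix, $p_i$ set $MaxTS_i.ml \succeq_e ts$ when it handled the request (or already held such a value), and $MaxTS_i.ml$ can thereafter only be replaced by $\prec_e$-larger timestamps, so at the time $r$ executes its \textsf{QuorumRead} every member of $Q_w$ still has $ml_i \succeq_e ts$. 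The answers $r$ collects come from a majority $Q_r$, and $Q_w \cap Q_r \neq \emptyset$, so $r$ sees some $ml_i \succeq_e ts$; by Lemma~\ref{lem:maxtimestamp} the clean maximal timestamp $ml_m$ that $r$ selects satisfies $ml_m \succeq_e ts$.

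Finally I would split on $ml_m$. If $ml_m = ts$, then by~\textbf{(ii)} $v_m = v$ and $r$ returns the value of its most recent preceding \textsf{write}. If $ml_m \succ_e ts$, let $w'$ be the \textsf{write} that generated $ml_m$ (it exists by~\textbf{(i)}), so $v_m$ is $w'$'s value by~\textbf{(ii)}; it remains to see that $w'$ overlaps $r$. It cannot happen before $r$: then, being distinct from $w$ and having a strictly larger timestamp, it would be a more recent \textsf{write} than $w$ preceding $r$, contradicting the choice of $w$. It cannot happen after $r$ either: $ml_m$ first comes into existence at the instant the writer assigns it to $MaxTS_0$, an event inside $w'$'s write period; if $w'$ were entirely after $r$, that event would follow $r$'s return, so $ml_m$ could not already be stored at the member of $Q_r$ from which $r$ read it. Hence $w'$ is concurrent with $r$, and $r$ returns a value written by a concurrent \textsf{write}. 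Either way the regularity predicate holds for $r$, and since $r$ was arbitrary the suffix satisfies regularity.

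I expect the main obstacle to be making facts~\textbf{(i)} and~\textbf{(ii)} airtight — i.e.\ certifying that in the chosen suffix every maximal timestamp a reader can observe is the genuine output of a \textsf{write} inside the suffix with an unambiguously attached value, and that no large timestamp planted by the transient fault (or lingering in a link) is still circulating above the stabilization threshold. This is where the link-capacity bound, Lemma~\ref{lem:hidden-revealed}, and the $\prec_e$-monotonicity of the writer's timestamps must be combined with care; by contrast the quorum-intersection step and the read-period/write-period timing argument in the case $ml_m \succ_e ts$ are routine once those facts are in hand.
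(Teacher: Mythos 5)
Your proposal is correct and follows essentially the same route as the paper's sketch: work inside the suffix supplied by Lemmas~\ref{lem:hidden-revealed}, \ref{lem:maxtimestamp} and~\ref{lemma:abort}, use quorum intersection with the last completed write together with the reader's selection of a maximal clean timestamp and the monotonicity of the writer's timestamps, and conclude that the returned value is that of the most recent preceding write or of a concurrent one. Your write-up is in places more careful than the paper's (the explicit timestamp-to-value binding in facts (i)--(ii), and the argument that a strictly larger observed timestamp must come from a write overlapping the read, whereas the paper argues by contradiction and in its second case asserts the read returns $w_2$'s timestamp, which is stronger than regularity requires), but these are refinements of the same argument rather than a different approach.
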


\begin{proof}
Let $e$ be an infinite execution of the system.
Following Lemma \ref{lemma:abort} and Lemma \ref{lem:hidden-revealed}, $e$ contains an infinite suffix, 
$e^\prime$, where any read returns a not abort value and any 
write includes in its decision set all the labels in the system.
Assume there is a process $p$ such that it read invocations 
allways return an obsolete value. That is, the value returned by the
read is either a hidden value or a value corresponding to a previous
write but not the most recent. Let $r$ be such a read.
In $e^\prime$, $r$ returns the output value with the maximum timestamp over
the set of labels returned by QuorumRead. 
Let $w_1$ and $w_2$ be two write operations such that $w_1$ happens
before $w_2$ and $r$. Since $w_1$ happens before $r$ then the label
computed by $w_1$ is written in at least a majority of processes via a 
QuorumWrite and is greater than any label in
the system. When $r$ starts invoking QuorumRead two cases may appear:
(1)$w_2$ didn't modify the value written by $w_1$ and didn't start 
its promotion via QuorumWrite or (2)$w_2$ executes QuorumWrite but didn't finish its execution.
In the first case, $w_1$'s MaxTS is the largest in the
system. When $r$ invokes the QuorumRead it gets $w_1$'s MaxTS value
(otherwise $w_1$ is not terminated) and returns it.
Hence, $r$ cannot return a value older than the one written by $w_1$.
In the second case, some processes contacted in the QuorumRead may
send the $w_1$'s MaxTS, other processes the $w_2$'s MaxTS. Since the 
MaxTS computation at the writter is sequential then $w_2$'s MaxTS 
is greater than $w_1$'s MaxTS. Then following lines 2 and 3 in the reader code, 
$r$ should return $w_2$'s MaxTS. 
Hence, $r$ will return the last written value.
\end{proof}

\begin{lemma}
Any execution of the system has an infinite suffix that satisfies the
no new/old inversion property.
\end{lemma}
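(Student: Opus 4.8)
The plan is to first pass to a sufficiently clean infinite suffix and then replay the classical ABD quorum-intersection argument on timestamps that have, by then, become totally ordered. Concretely, I would begin by invoking Lemmas~\ref{lem:hidden-revealed}, \ref{lem:maxtimestamp}, \ref{lemma:abort} and~\ref{lemma:regularity} to fix an infinite suffix $e'$ of the given execution in which: no read aborts; every {\sf QuorumRead} issued by a reader returns a maximum clean timestamp; every hidden timestamp has already been revealed to the writer (or stays hidden forever); and, from some point on, the writer is aware of a superset of all epochs occurring in the system, so that its successive timestamps are strictly increasing under $\prec_e$ and, crucially, all timestamps that ever appear in $e'$ are pairwise $\prec_e$-comparable. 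In particular, within $e'$ distinct write operations carry distinct timestamps, so we may identify the value read by a non-aborting read with the unique write that produced the timestamp it carries; and $w$ happening before $w'$ implies $ts(w) \prec_e ts(w')$, since the writer computes timestamps sequentially with $\mathbf{Next}_e$ and never reuses an epoch/sequence-number pair once in the clean regime.

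Next I would isolate the one structural fact that does the real work, a \emph{stability} property: once a timestamp $ts$ has been acknowledged by a full quorum during $e'$, every {\sf QuorumRead} started afterwards returns a maximal timestamp that is $\succeq_e ts$. Its proof has two ingredients. The first is per-processor monotonicity of $MaxTS_i.ml$ under $\prec_e$: the reader-side {\sf QuorumWrite} handler overwrites $MaxTS_i$ only with a strictly larger timestamp (lines 7--9 of the {\sf read} code), and the piggy-backed data-link updates touch only the canceling field $cl_i$, never lowering $ml_i$; the canceling field, when not $\bot$, is by construction never $\prec_b$-below $ml_i$. The second is that any two majority quorums intersect, so the quorum $Q_2$ answering a later {\sf QuorumRead} contains a processor $p_j$ that also belongs to the quorum $Q_1$ which acknowledged the write of $ts$; at that acknowledgement $MaxTS_j.ml \succeq_e ts$, and monotonicity preserves this until $p_j$ answers the read. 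Since the reader takes the $\prec_e$-maximum over all returned $ml$ and $cl$ fields together with its own, the timestamp it selects is $\succeq_e ts$.

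Given stability, the lemma follows quickly. Suppose $r_1$ reads a value $v_2$ from a concurrent write $w_2$, and put $ts_2 = ts(w_2)$. Because $r_1$ does not abort in $e'$, before returning it performed a {\sf QuorumWrite} of $\langle ts_2, v_2 \rangle$ (line~4 of the {\sf read} code), so by the time $r_1$ completes, $ts_2$ is held by a full quorum. Now take any read $r_2$ happening after $r_1$: its {\sf QuorumRead} begins after $r_1$ returns, hence after $ts_2$ reached a quorum, so stability gives that the maximal timestamp $ts_m$ it selects satisfies $ts_m \succeq_e ts_2$, and $r_2$ returns the value associated with $ts_m$. If $w_1$ is any write that happens before $w_2$, then $ts(w_1) \prec_e ts_2 \preceq_e ts_m$; hence $ts_m \neq ts(w_1)$, and since distinct writes in $e'$ have distinct timestamps, $r_2$ does not return the value written by $w_1$. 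Thus no read after $r_1$ reads from a write preceding $w_2$, which is exactly the no new/old inversion property, so $e'$ is the desired suffix.

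I expect the main obstacle to be making the stability/monotonicity step fully rigorous: one must verify that none of the assignments to $cl_i$, and no corrupted timestamp still resident in the links at the start of $e'$, can cause a processor to answer a {\sf QuorumRead} with a timestamp $\prec_e$ one it has already acknowledged. This is exactly where the hypothesis that we are already inside the clean suffix (all epochs totally ordered, writer aware of all of them) is indispensable, since outside $e'$ the claim simply fails — a processor could still be holding an incomparable ``corrupted'' label. The remaining quorum-intersection reasoning is the standard ABD argument and is routine.
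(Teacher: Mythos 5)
Your proposal is correct and follows essentially the same route as the paper's proof: after restricting to the clean suffix given by the earlier lemmas, the key step is that $r_1$'s helping {\sf QuorumWrite} places $ts_2$ at a full quorum before $r_2$ begins, and quorum intersection then forces $r_2$ to select a timestamp at least $ts_2$, ruling out a return from any earlier write. The only difference is presentational --- you phrase it as an explicit stability/monotonicity lemma argued directly, whereas the paper argues by contradiction on a specific $w_1,w_2,r_1,r_2$ scenario and leaves the per-processor monotonicity of $MaxTS_i.ml$ implicit.
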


\begin{proof}
Let $e$ be an execution of the system.
Following Lemmas \ref{lemma:abort} and \ref{lemma:regularity}, $e$ has
an infinite suffix, $e^\prime$, that satisfies the regularity property and in which any read
invocation does not return abort.
In the following we prove that $e^\prime$ does not violate the new/old inversion property. 
Consider two write operations $w_1$ and $w_2$ in $e^\prime$ such that 
$w_1$ happens before $w_2$. Consider also two read operations $r_1$ and $r_2$ such that $r_1$ 
happens before $r_2$ and $w_1$ happens before $r_1$\footnote{Following the 
transivity of the relation happens before, $w_1$ also happens before $r_2$.}.
Assume $r_1$ and $r_2$ are concurrent with $w_2$. Assume a new/old inversion 
happens and $r_1$ returns the value written by $w_2$. Let denote the MaxTS of this value with $l_2$. 
Assume also $r_2$ retuns the value written by $w_1$ whch MaxTS is $l_1$. 
Since $r_1$ happens before $r_2$ then before the start of $r_2$, $r_1$ executes
the following actions: it modifies its MaxTS to $l_2$, it also 
executes QuorumWrite in order to inform the system of its new value. 
Since QuorumWrite retuns before the $r_1$ finishes then 
$l_2$ is already adopted by at least a majority of processes. 
That is, since $l_2 \succ_e l_1$ ($w_1$ happens before $w_2$), then $l_2$ replaces 
$l_1$ in at least a 
majority of processes.

We assumed $r_2$ returns $l_1$. Since $r_1$ happens before $r_2$ then 
$r_2$ starts its QuorumRead after $r_1$ returned so after $r_1$ completed 
its QuorumWrite operation.
This implies that $l_2$ is the label adopted by at least a majority of 
processes and at least one process in this majority will respond while $r_2$ invokes its QuorumRead. 
That is, the $r_2$ collects at least one label $l_2$ and since 
$l_2 \succ_e l_1$, $r_2$ should return this value.
This contradicts the assumption $r_2$ retuns $l_1$.
It follows that $e^\prime$ verifies the no new/old inversion property.  
 \end{proof}

\begin{thebibliography}{11}
\bibitem{A03} Uri Abraham, ``Self-stabilizing timestamps'', {\em Theoretical Computer Science} 308(1-3): pp. 449-515 2003.
\bibitem{A10} Hagit Attiya, ``Robust Simulation of Shared Memory: 20 Years After,'' {\it EATCS Distributed Computing Column}, 2010.
\bibitem{ABD95} Hagit Attiya, Amotz Bar-Noy and Danny Dolev, ``Sharing Memory Robustly in Message-Passing Systems,'' {\it Journal of the ACM}, 42, 1, pp. 124-142, 1995.
\bibitem{D2K} Shlomi Dolev, {\it Self-Stabilization}, MIT Press, 2000.
\bibitem{DH01} Shlomi Dolev and Ted Herman, ``Dijkstra�s self-stabilizing algorithm in
unsupportive environments,'' {\it 5th International Workshop on Self-Stabilizing Systems}, Springer LNCS:2194, pp. 67-81, 2001.
\bibitem{DIM97}Shlomi Dolev, Amos Israeli, Shlomo Moran, ``Resource Bounds for Self-Stabilizing Message-Driven Protocols,'' {\em SIAM J. Comput.} 26(1): 273-290 (1997).
\bibitem{DKS06} Shlomi Dolev, Ronen I. Kat and Elad M. Schiller,
``When Consensus Meets Self-stabilization, Self-stabilizing Failure-Detector,
and Replicated State-Machine,'' {\it Proc. of the 10th International Conference on Principles of Distributed Computing}, (OPODIS), 2006.
\bibitem{DS97} Danny Dolev and Nir Shavit, ``Bounded Concurrent timestamping,''
{\it SIAM Journal on Computing}, 26 (2), pp. 418-455, 1997.
\bibitem{DT09} Shlomi Dolev, Nir Tzachar, ``Empire of colonies: Self-stabilizing and self-organizing distributed algorithm,'' {\it Theoretical Computer Science} 410(6-7): 514-532 (2009).
\bibitem{IL93} Amos Israeli, Ming Li, ``Bonded timestamps,'' {\it Distributed Computing} 6(4): 205-209 (1993).
\bibitem{JH09} Colette Johnen and Lisa Higham, ``Fault-tolerant Implementations of Regular
Registers by Safe Registers with Applications to Networks,'' {\it Proc. of the 10th International Conference on Distributed Computing and Networking} (ICDCN 2009), LNCS 5408, pp. 337-448, 2009.
\bibitem{L98} Leslie Lamport, ``The part-time parliament,'' {\it ACM Transactions on Computer Systems}, 16(2):133-169, 1998.
\bibitem{dolev} Shlomi Dolev, Nir Tzachar, "Spanders: distributed spanning expanders," {\it SAC 2010}. 
\bibitem{lamport86}Leslie Lamport, "On Interprocess Communication. Part I: Basic Formalism," {\it Distributed Computing},1(2):77-85,1986. 

\end{thebibliography}
\end{document}